\newtheorem{lemma}{Lemma}
\newtheorem{remark}{Remark}
\newenvironment{proof}{ \textbf{Proof:} }{ \hfill $\Box$}
\newcommand{\equalityA}{\stackrel{(a)}{=}}
\def\bb0{{\mathbb{0}}}
\def\ba{{\mathbf{a}}}
\def\bb{{\mathbf{b}}}
\def\bff{{\mathbf{f}}}
\def\b0{{\mathbf{0}}}
\def\cG{\mathcal{G}}
\def\cO{\mathcal{O}}
\def\sf0{{\mathsf{0}}}
\def\rm0{{\mathrm{0}}}
\def\b0{{\pmb{0}}} 
\begin{document}

\title{A wideband generalization of the near-field region for extremely large phased-arrays}

\author{\IEEEauthorblockN{Nitish  Deshpande, \textit{Student Member, IEEE,} Miguel R. Castellanos, \textit{Member, IEEE,} Saeed R. Khosravirad, \textit{Member, IEEE,} Jinfeng Du,   \textit{Member, IEEE,}   Harish Viswanathan, \textit{Fellow, IEEE,} and Robert W. Heath Jr., \textit{Fellow, IEEE}}\\ 
	\thanks{ Nitish  Deshpande,  Miguel R. Castellanos, and Robert W. Heath Jr. are
	 with the Department of Electrical and Computer Engineering, North Carolina State University, Raleigh, NC 27606 USA (email: \{nvdeshpa, mrcastel, rwheathjr\}@ncsu.edu).
	 Saeed R. Khosravirad, Jinfeng Du, and Harish Viswanathan are with Nokia
	 Bell Laboratories, Murray Hill, NJ 07974, USA (email: \{saeed.khosravirad,  jinfeng.du,
	 harish.viswanathan\}@nokia-bell-labs.com).
	  This project is funded by Nokia Bell Laboratories, Murray Hill, NJ 07974, USA.}
}




\maketitle

\begin{abstract}
The narrowband and far-field assumption in conventional wireless system design leads to a mismatch with the optimal beamforming required for wideband and near-field systems. This discrepancy is exacerbated for larger apertures and bandwidths. To characterize 
the behavior of near-field and wideband systems, we derive the beamforming gain expression achieved by a frequency-flat phased array designed for plane-wave propagation. To determine the far-field to near-field boundary for a wideband system, we propose a frequency-selective distance metric. The proposed far-field threshold increases for frequencies away from the center frequency. The analysis results in a fundamental upper bound on the product of the array aperture and the system bandwidth. We present numerical results to illustrate how the gain threshold affects the maximum usable bandwidth for the n260 and n261 5G NR bands.
\end{abstract}

\begin{IEEEkeywords}
Near-field, wideband, phased-array, frequency-selective, beamforming gain.
\end{IEEEkeywords}

\section{Introduction}
Distinguishing between the near-field and far-field is becoming increasingly relevant as modern wireless devices begin to operate in both propagation regions.
The most common near-field distance, known as the Fraunhofer   distance, is
proportional to the square of the aperture and inversely proportional to the  wavelength~\cite{selvan_fraunhofer_2017}. 
To satisfy the data rate requirements of 5G and beyond, wireless systems have shifted to higher carrier frequencies and larger antenna arrays\cite{tripathi_millimeter-wave_2021}\cite{heath_overview_2016}. 
For these modern arrays, the Fraunhofer distance becomes comparable   to the typical cell radius. For example, the Fraunhofer array distance for a uniform linear array (ULA) with 128 antennas and half-wavelength inter-antenna spacing operating at 28 GHz is around 88 m. This is a good fraction of the cell radius of  an urban microcellular and picocellular deployment~\cite{6732923}. 
In the near-field, the phase variation over the array aperture is non-linear in antenna index, which causes a phase mismatch  when assuming far-field propagation with  planar wavefronts~\cite{selvan_fraunhofer_2017}\cite{cui_channel_2022-1}. Inaccurate use of the far-field assumption can lead to beamforming gain losses that worsen as the array aperture increases\cite{cui_channel_2022-1}.

   Wideband systems with large arrays suffer from a phenomemon known as beam squint, i.e.,
  the mismatch between the frequency-flat response of the phased-array and the frequency-selective response of the wideband channel. This  reduces the beamforming  gain for frequencies away from the center frequency. For wideband systems operating in the near-field, the two mismatches due to far-field and narrowband system design jointly affect the beamforming  gain.
  Hence, it is crucial to  characterize this combined effect for the performance analysis of large phased-array wideband systems.

Near-field and wideband effects have generally been considered separately in the literature.
Various near-field metrics have been proposed in prior work\cite{bjornson_power_2020,cui_near-field_2021-1 }. The definition of the Fraunhofer array distance is based on the phase variation of a monochromatic wave over  the array length\cite{selvan_fraunhofer_2017}.
In \cite{bjornson_power_2020}, the proposed near-field metric
     used the amplitude variations as the criterion for determining the far-field to near-field transition distance.   
      The metrics in \cite{bjornson_power_2020,selvan_fraunhofer_2017} did not consider the angle of incidence  and have assumed broadside incidence only. The effective Rayleigh distance    incorporates the incidence angle in the  beamforming gain analysis\cite{cui_near-field_2021-1}. 
 One common shortcoming of these methods  is that the  beamforming gain analysis    is restricted only for a single frequency and not for a band of frequencies.
 Although \cite{7841766,gao_wideband_2021} analyzed the beamforming gain for wideband systems and incorporated beam squint effect, they are based on the plane-wave approximation.
To summarize, the existing work on the beamforming gain analysis either assumes a near-field and narrowband system \cite{bjornson_power_2020,selvan_fraunhofer_2017,cui_near-field_2021-1} 
or a far-field and wideband system\cite{7841766,gao_wideband_2021}.

In this letter, 
we analyze the beamforming gain of a   multiple-input-single-output (MISO) communication system with ULA at transmitter for a general near-field and wideband channel  with a beamformer based on the  far-field and narrowband assumption.
For a large number of antennas, the beamforming gain  can be approximated with a closed-form expression.
The derived expression can be generalized to arbitrary carrier frequencies and array sizes. 
  We propose the bandwidth-aware-near-field distance ($\mathsf{BAND}$) to characterize the far-field to near-field transition in a wideband system. The $\mathsf{BAND}$ increases for frequencies away from the center frequency, which implies that wideband systems have a larger near-field region.
 Expressing the system parameters as a function of the beamforming gain uncovers a  tradeoff  between the aperture and bandwidth.

\textit{Notation}: A bold lowercase letter $\ba$ denotes a vector,  $(\cdot)^{\ast}$ denotes conjugate transpose, $|\cdot|$  indicates absolute value, $[\ba]_n$ denotes the $n$th element of $\ba$, $\cO(\cdot)$ denotes the big Oh notation, 
$\mathcal{C}(\gamma)=\int_{0}^{\gamma} \cos(\frac{\pi}{2}t^2)\mbox{d}t $ and 
$\mathcal{S}(\gamma)=\int_{0}^{\gamma} \sin(\frac{\pi}{2}t^2)\mbox{d}t $ denote cosine and sine Fresnel functions, $\text{inf}\{\cdot\}$ denotes the infimum.

\section{System model}
Let us assume a {co-polarized} single-user MISO communication system  with an $N$ antenna ULA at the transmitter and a single antenna at the receiver. All antennas are assumed to be isotropic.
 The transmitter is oriented  along the $x$ axis with inter-antenna spacing $d$.
The 
$x$ coordinate of the $n$th transmit antenna is  defined as $ d_n=\frac{2n -N+1}{2}d$ for $ n=0,1, \dots, N-1$. The receiver is located at a distance $r$ from the transmit array center, i.e., the origin, and at an angle $\theta$ with the $y$ axis. 
The receiver location is $(r \sin(\theta), r \cos(\theta))$. 
The distance between the receive antenna and the $n$th transmit antenna is  $ r_n= \sqrt{r^2- 2 r d_n \sin(\theta)+ d_n^2}$.

The channel is assumed to be  a  line-of-sight (LOS) path between the transmitter and receiver. 
The path can be characterized by the path loss and the path delay.
The  path loss between all transmit antennas and the receive antenna is assumed to be the same and denoted by $G(r)$. This assumption is valid for $r\sim \cO(L)$ where $L=N d$ is the array aperture\cite{lu_communicating_2021}.
Let     $f_\mathsf{c}$ denote the   center frequency of the passband signal  and    $c$ denote the speed of light. 
The passband time-domain channel impulse response from the $n$th antenna is
\begin{equation}
 [\boldsymbol{\mathit{h}}_{\mathsf{p}}(t)]_n= \sqrt{G(r)} \delta\left(t-\frac{r_n}{c}\right).
\end{equation}
 The pseudo-complex baseband equivalent channel response  after down-conversion is 
 \begin{equation}
  [\boldsymbol{\mathit{h}}_{\mathsf{b}}(t)]_n= \sqrt{G(r)} e^{-j 2\pi \frac{r_n}{c}f_\mathsf{c} }\delta\left(t-\frac{r_n}{c}\right).
 \end{equation}
The frequency-domain channel impulse response at baseband frequency  $f$ is 
\begin{equation}\label{eqn: f domain channel impulse}
[\boldsymbol{\mathsf{h}}(f)]_n= \sqrt{G(r)} e^{-j 2\pi \frac{r_n}{c}(f_\mathsf{c} +f)} .
\end{equation}
The general channel response in \eqref{eqn: f domain channel impulse} can be approximated under the narrowband and far-field assumptions. Under the narrowband assumption, the baseband frequency can be treated as small, i.e. $f  \approx 0$. 
Taking the series expansion of the channel phase response around $f = 0$, we have $\exp({-j 2\pi \frac{r_n}{c}(f_\mathsf{c} +f)}) = \exp({-j 2\pi \frac{r_n}{c}(f_\mathsf{c}  + \cO(f))}) $. Under the far-field assumption,
the distance $r$ can be treated large, i.e., $r \rightarrow \infty$.
Taking the series expansion of the channel phase response around $r \rightarrow \infty$,
 we have   $\exp({-j 2\pi \frac{r_n}{c}(f_\mathsf{c} +f)  })=  \exp\left({-j 2\pi \frac{\left(r - d_n \sin(\theta)+ \cO\left(\frac{1}{r}\right)\right)}{c}(f_\mathsf{c} +f)  } \right)$. Combining both of the expansions, we have,  $\exp\left({-j 2\pi \frac{r_n}{c}(f_\mathsf{c} +f)  }\right) = \\ \exp\left(-j 2\pi \frac{\left(r - d_n \sin(\theta)+ \cO\left(\frac{1}{r}\right)\right)}{c} (f_\mathsf{c}  + \cO(f))  \right) $ when the narrowband and far-field assumptions both hold.
Using the subscripts $\mathsf{nf}$, $\mathsf{ff}$,  $\mathsf{wb}$, and $\mathsf{nb}$ to denote near-field, far-field, wideband, and narrowband assumptions, respectively, we summarize the 
four  channel models, for the $N \times 1$ channel vectors, $\boldsymbol{\mathsf{h}}_{\mathsf{nf,wb}}(f)$, $\boldsymbol{\mathsf{h}}_{\mathsf{nf,nb}}$, $\boldsymbol{\mathsf{h}}_{\mathsf{ff,wb}}(f)$, and $\boldsymbol{\mathsf{h}}_{\mathsf{ff,nb}}$,   in Table \ref{tab: summary channels}.
   
\begin{table}[htbp]
	\centering
	\caption{Summary of the channel models}
	\label{tab: summary channels}
	
	\begin{tabular}{|c |c |} 
		\hline
		Channel response &  Expression \\ [0.5ex] 
		\hline
		$[\boldsymbol{\mathsf{h}}_{\mathsf{nf,wb}}(f)]_{n}$  & 	$ \sqrt{G(r)} e^{-j 2\pi \frac{r_n}{c}(f_\mathsf{c} +f)  }$ \\ 
		\hline
		$[\boldsymbol{\mathsf{h}}_{\mathsf{nf,nb}}]_{n}$ & $	 \sqrt{G(r)}  e^{-j 2\pi \frac{r_n}{c} f_{\mathsf{c}} }$ \\
		\hline
		$[\boldsymbol{\mathsf{h}}_{\mathsf{ff,wb}}(f)]_{n}$ & $	 \sqrt{G(r)} e^{-j 2\pi \frac{(r - d_n \sin(\theta))}{c}(f_\mathsf{c} +f)  }$ \\
		\hline
		$[\boldsymbol{\mathsf{h}}_{\mathsf{ff,nb}}]_{n}$ & $\sqrt{G(r)} e^{-j 2\pi \frac{(r - d_n \sin(\theta))}{c}f_\mathsf{c}   }  $ \\ [1ex] 
		\hline
	\end{tabular}
	
\end{table}

The most general channel response is $\boldsymbol{\mathsf{h}}_{\mathsf{nf,wb}}(f) $. 
The optimal beamforming vector with unit norm constraint that maximizes the signal to noise ratio is  $\bff_{\mathsf{nf,wb}}(f)=\frac{1}{\sqrt{N G(r)}} \boldsymbol{\mathsf{h}}_{\mathsf{nf,wb}}(f)  $. Using a mismatched beamforming vector, i.e., $\bff  \neq \bff_{\mathsf{nf,wb}}(f)$ leads to beamforming gain loss.
To characterize this mismatch, we  define 
 vectors  $\bff_{\mathsf{nf,nb}}$, $\bff_{\mathsf{ff,wb}}(f)$, and $\bff_{\mathsf{ff,nb}} $ similarly.

Each of the four beamforming vectors allow different  hardware implementations. 
 The beamforming vectors $\bff_{\mathsf{ff,nb}} $  and $\bff_{\mathsf{nf,nb}}$ are frequency-flat and can be implemented using narrowband phase-shifters, whereas  $\bff_{\mathsf{nf,wb}}(f)  $ and $\bff_{\mathsf{ff,wb}}(f)$ are frequency-selective and can be implemented at higher cost as a fully-digital space-time precoder \cite{daniels2005miso} or an analog true-time-delay  architecture\cite{rotman_true_2016}. 
 In this letter, we analyze the 
 mismatch that occurs when using $\bff_{\mathsf{ff,nb}}$ for a near-field wideband system. This analysis is crucial to characterize the scenarios where  frequency-flat beamforming  does not work well.

\section{Beamforming gain analysis}\label{sec: array gain analysis}    
  We analyze the  performance loss due to  $\bff_{\mathsf{ff,nb}}$ 
   in terms of the  normalized beamforming gain. The definitions of the normalized beamforming gains under different channel assumptions are summarized in Table \ref{tab: norm array gain}. The normalization is such that the maximum gain value is 0 dB.
   
   \begin{table}[htbp]
   	\centering
   	\caption{Definitions of normalized beamforming gains}
   	
   	\label{tab: norm array gain}
   	\begin{tabular}{|c |c |} 
   		\hline
   		Normalized beamforming gain &  Expression \\ [0.25ex]  
   		\hline
   		$	\mu_{\mathsf{nf,wb}}(f)$  & 	$ \frac{1}{\sqrt{G(r) N}} \left| \boldsymbol{\mathsf{h}}_{\mathsf{nf,wb}}^{\ast}(f)\bff_{\mathsf{ff,nb}} \right|$ \\  [1.5ex] 
   	\hline
   		$\mu_{\mathsf{ff,wb}}(f)$\cite{gao_wideband_2021} & $ \frac{1}{\sqrt{G(r) N}}  \left|\boldsymbol{\mathsf{h}}_{\mathsf{ff,wb}}^{\ast}(f)\bff_{\mathsf{ff,nb}}   \right|$ \\
   		[1.5ex] 
   		\hline
   	
   		$\mu_{\mathsf{nf,nb}}$\cite{cui_channel_2022-1} &  $\frac{1}{\sqrt{G(r) N}}  \left| \boldsymbol{\mathsf{h}}_{\mathsf{nf,nb}}^{\ast}\bff_{\mathsf{ff,nb}}  \right|$ \\
   	 [1.5ex] 
   		\hline
   	\end{tabular}
   
   \end{table}
The beamforming gain $\mu_{\mathsf{ff,wb}}(f)$ only captures the wideband effect  and   $	\mu_{\mathsf{nf,nb}}$ only captures the near-field effect.  

In Lemma~\ref{lem: mu r theta f}, we express the inner product in $	\mu_{\mathsf{nf,wb}}(f)$  as a summation of the complex phases over $N$ antennas. To express the phase in terms of dimensionless parameters, 
let the normalized distance be $\bar{r}=\frac{r}{\lambda_\mathsf{c}}$, the normalized inter-antenna spacing be $ \bar{d}=\frac{d}{\lambda_\mathsf{c}}$, and the
normalized frequency be  $\bar{f}=\frac{f}{f_\mathsf{c}}$.
In the radiating near-field (Fresnel) region, i.e.,  $r_n > 0.5\sqrt{\frac{L^3}{\lambda_\mathsf{c}}}$,  the $\cO{\left(\frac{1}{r^2}\right)}$ term in the expansion of $r_n=  r- d_n \sin(\theta) + \frac{ d_n^2 \cos^2(\theta)}{2 r}+ \cO{\left(\frac{1}{r^2}\right)}$ can be ignored~\cite{selvan_fraunhofer_2017}.
 This assumption enables the decomposition of the  phase for each antenna into two phase terms that individually capture the  near-field and wideband phenomenon.
 At the $n$th antenna, we define the  phase contribution due to the wideband assumption, $\phi_{\mathsf{wb}}=-n \bar{d} \sin(\theta) \bar{f} $, and the phase contribution due to the near-field assumption, $ \phi_{\mathsf{nf}}=(\bar{f}+1) \frac{\bar{d}^2}{2 \bar{r}}\cos^2(\theta) \left(n- \frac{N-1}{2}\right)^2   $.

\begin{lemma}\label{lem: mu r theta f}
In the Fresnel region, 	the near-field wideband beamforming gain can be approximated  as $\mu_{\mathsf{nf,wb}}^{\mathsf{approx}}(\bar{f})$, where
\begin{equation}\label{eqn: mu r theta f summation form}
\mu_{\mathsf{nf,wb}}^{\mathsf{approx}}(\bar{f})= \left|\frac{1}{N}  \sum_{n=0}^{N-1}  e^{j{2\pi}(\phi_{\mathsf{wb}}+\phi_{\mathsf{nf}}  )}\right|  .
\end{equation}
	\end{lemma}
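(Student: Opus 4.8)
The plan is to compute the inner product $\boldsymbol{\mathsf{h}}_{\mathsf{nf,wb}}^{\ast}(f)\bff_{\mathsf{ff,nb}}$ directly from the definitions in Tables \ref{tab: summary channels}--\ref{tab: norm array gain}, using the fact that both the channel and the beamformer carry the common amplitude $\sqrt{G(r)}$, so that only a residual phase survives in the normalized gain. Writing $[\bff_{\mathsf{ff,nb}}]_n = \frac{1}{\sqrt{N}} e^{-j2\pi \frac{(r - d_n \sin(\theta))}{c} f_\mathsf{c}}$ and pairing it with the conjugate of $[\boldsymbol{\mathsf{h}}_{\mathsf{nf,wb}}(f)]_n$, the factor $\sqrt{G(r)}$ in each channel entry cancels against the $1/\sqrt{G(r)N}$ normalization, leaving
\[
\mu_{\mathsf{nf,wb}}(f) = \left| \frac{1}{N} \sum_{n=0}^{N-1} e^{j 2\pi \psi_n} \right|, \qquad \psi_n = \frac{r_n}{c}(f_\mathsf{c}+f) - \frac{(r - d_n \sin(\theta))}{c} f_\mathsf{c}.
\]

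Next I would substitute the Fresnel expansion $r_n = r - d_n \sin(\theta) + \frac{d_n^2 \cos^2(\theta)}{2r} + \cO(\frac{1}{r^2})$ and drop the $\cO(\frac{1}{r^2})$ term, which is admissible in the radiating near-field region $r_n > 0.5\sqrt{L^3/\lambda_\mathsf{c}}$ as stated above. The linear-in-$d_n$ contribution $-\frac{f_\mathsf{c}}{c}d_n \sin(\theta)$ coming from the $f_\mathsf{c}$ part of $\frac{r_n}{c}(f_\mathsf{c}+f)$ cancels exactly against the matching term of the beamformer phase, so that $\psi_n$ collapses to a quadratic-in-$d_n$ piece scaled by $(f_\mathsf{c}+f)$, a linear-in-$d_n$ piece scaled by $f$, and the $n$-independent remainder $\frac{rf}{c}$.

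I would then pass to dimensionless variables via $\lambda_\mathsf{c}=c/f_\mathsf{c}$, $d_n = (n-\frac{N-1}{2})\bar{d}\lambda_\mathsf{c}$, $r=\bar{r}\lambda_\mathsf{c}$, and $f=\bar{f}f_\mathsf{c}$. The quadratic piece becomes $(\bar{f}+1)\frac{\bar{d}^2}{2\bar{r}}\cos^2(\theta)(n-\frac{N-1}{2})^2 = \phi_{\mathsf{nf}}$, and the linear piece becomes $-\bar{f}\bar{d}\sin(\theta)(n-\frac{N-1}{2})$. Finally I would discard every $n$-independent phase — both the constant $\frac{rf}{c}$ and the constant $\frac{N-1}{2}\bar{d}\sin(\theta)\bar{f}$ produced by re-centering the linear index — since such a global phase multiplies the whole sum by a unit-modulus factor and leaves $|\cdot|$ unchanged. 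The surviving linear term is then precisely $\phi_{\mathsf{wb}} = -n\bar{d}\sin(\theta)\bar{f}$, giving $\psi_n \equiv \phi_{\mathsf{wb}}+\phi_{\mathsf{nf}}$ modulo a global phase and hence $\mu_{\mathsf{nf,wb}}(f)=\mu_{\mathsf{nf,wb}}^{\mathsf{approx}}(\bar{f})$.

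The only real subtlety, the rest being routine algebra, is the bookkeeping of which phase terms are genuinely $n$-dependent and which are global. In particular, the stated $\phi_{\mathsf{wb}}$ uses the raw index $n$ rather than the centered index $n-\frac{N-1}{2}$ that emerges naturally from $d_n$; the plan therefore makes explicit that the discrepancy is an $n$-independent phase and so is immaterial to the magnitude. This is the one step I would state deliberately rather than absorb silently into the computation, since it is what licenses both the dropping of $\frac{rf}{c}$ and the replacement of the centered index by $n$ in $\phi_{\mathsf{wb}}$.
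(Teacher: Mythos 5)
Your proposal is correct and follows essentially the same route as the paper's proof: form the inner product, cancel the common amplitude, apply the Fresnel expansion of $r_n$, pass to normalized variables, and discard the $n$-independent phases. The only difference is that you make explicit the global-phase bookkeeping (the constant $\frac{rf}{c}$ and the shift from the centered index to the raw index $n$ in $\phi_{\mathsf{wb}}$), which the paper compresses into the single phrase ``factoring out the terms which do not depend on the index $n$.''
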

\begin{proof}
The phase term in the inner product $\boldsymbol{\mathsf{h}}_{\mathsf{nf,wb}}^{\ast}(f)\bff_{\mathsf{ff,nb}} $ at the $n$th antenna  is
\begin{align}
	&=-\angle([\boldsymbol{\mathsf{h}}_{\mathsf{nf,wb}}(f)]_n)+ 	\angle( [\bff_{\mathsf{ff,nb}}]_n)  \nonumber \\
	&  \stackrel{(a)}{=} - \angle([\boldsymbol{\mathsf{h}}_{\mathsf{nf,wb}}(f)]_n)+  \angle([\boldsymbol{\mathsf{h}}_{\mathsf{ff,nb}}]_n)  \nonumber \\
	 & \stackrel{(b)}{\approx}\frac{2\pi}{c}\bigg[ \left(r- d_n \sin(\theta) + \frac{ d_n^2 \cos^2(\theta)}{2 r} \right)(f_\mathsf{c} +f)
	 \nonumber \\ &-(r- d_n \sin(\theta))f_\mathsf{c}\bigg],
\end{align}
 where equality $(a)$
follows from the definition of $\bff_{\mathsf{ff,nb}}$ and approximation $(b)$ follows from Table ~\ref{tab: summary channels} and expansion of $r_n$. By factoring out the terms which do not depend on the index $n$, and using the definition of $\bar{f}$, $\bar{d}$, and $\bar{r}$, we get \eqref{eqn: mu r theta f summation form}.
 \end{proof}

Lemma~\ref{lem: mu r theta f} illustrates a few key  insights.
The expression for $\mu_{\mathsf{nf,wb}}(f)$ depends only on the normalized parameters. Hence, this analysis is independent of the carrier frequency; the same beamforming gain can be obtained for different carrier frequencies provided that the normalized parameters remain fixed. 
The expression for the  far-field wideband  gain $		\mu_{\mathsf{ff,wb}}(f)$ in \cite{gao_wideband_2021} can be obtained by setting $ \phi_{\mathsf{nf}}=0$.
The near-field narrowband gain $		\mu_{\mathsf{nf,nb}}$ in
\cite{cui_channel_2022-1} can be obtained by 
 setting $\phi_{\mathsf{wb}}=0 $ and $\bar{f}=0$.
The phase
$\phi_{\mathsf{wb}} $ is linear and   $ \phi_{\mathsf{nf}}$  is quadratic in the antenna index. Both $\phi_{\mathsf{wb}} $ and $ \phi_{\mathsf{nf}}$ are dependent on $\bar{f}$. Hence, existing beamforming methods based on uniform spacing and narrowband assumption do not work well.

In Lemma~\ref{lem: mu r theta f fresnel}, we  further simplify  \eqref{eqn: mu r theta f summation form} to get an expression  in terms of Fresnel functions whose arguments depend on the system parameters. 
 The purpose of this simplification is to establish a closed-form algebraic relationship between the system parameters and the beamforming gain threshold.
We define the normalized  array aperture as $\bar{L}= N \bar{d}$.
Lemma~\ref{lem: mu r theta f fresnel} shows the relationship between the parameters, $\{ \bar{f}, \bar{r}, \bar{L}, \theta \}$, and a 2D parameter space defined
by $\gamma_1$ and $\gamma_2$ as 
\begin{equation}\label{eqn: gamma1}
	\gamma_1=-\tan(\theta) \bar{f} \sqrt{\frac{  2\bar{r}}{ {1+\bar{f}}}},
\end{equation} 
\begin{equation}\label{eqn: gamma2}
	\gamma_2=   \bar{L}\cos(\theta)\sqrt{ \frac{1+\bar{f}}{2\bar{r}}}.
\end{equation}
The compression from four  parameters to two new parameters $\gamma_1$ and $\gamma_2$ allows us to visualize the beamforming gain function. It also simplifies numerical simulations by varying $\gamma_1$ and $\gamma_2$ instead of varying four system parameters.

{\begin{lemma}\label{lem: mu r theta f fresnel}
		 The expression of $\mu_{\mathsf{nf,wb}}^{\mathsf{approx}}(\bar{f})$  in \eqref{eqn: mu r theta f summation form} can be further approximated for large $N$ with fixed $d$ 
	 as
	\begin{equation}\label{eqn: G gamma1 gamma2}
		\cG(\gamma_1, \gamma_2) = \underset{N\rightarrow \infty}{\mathsf{lim}} \mu_{\mathsf{nf,wb}}^{\mathsf{approx}}(\bar{f})= \left|\frac{ \bar{\mathcal{C}}(\gamma_1, \gamma_2)+ j\bar{\mathcal{S}}(\gamma_1, \gamma_2)}{2 \gamma_2}  \right|,
	\end{equation}
$\bar{\mathcal{C}}(\gamma_1, \gamma_2)\equiv \mathcal{C}(\gamma_1+\gamma_2)-\mathcal{C}(\gamma_1-\gamma_2) $ and $\bar{\mathcal{S}}(\gamma_1, \gamma_2)\equiv \mathcal{S}(\gamma_1+\gamma_2)-\mathcal{S}(\gamma_1-\gamma_2) $. 
\end{lemma}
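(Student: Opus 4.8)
The plan is to recognize the normalized sum in \eqref{eqn: mu r theta f summation form} as a Riemann sum of a quadratic-phase (chirp) exponential and then pass to the limit, whereupon the sum becomes a Fresnel integral. First I would re-index by the centered variable $m = n - \frac{N-1}{2}$. The near-field phase $\phi_{\mathsf{nf}}$ is already a multiple of $m^2$, while $\phi_{\mathsf{wb}} = -n\bar{d}\sin(\theta)\bar{f}$ splits into a term linear in $m$ plus a constant coming from the $\frac{N-1}{2}$ offset; since that constant does not depend on $n$, it factors out of the sum as a unit-modulus phase and is annihilated by the outer $|\cdot|$. Writing the retained phase as $2\pi(a m^2 + b m)$ with $a = (\bar{f}+1)\frac{\bar{d}^2}{2\bar{r}}\cos^2(\theta)$ and $b = -\bar{d}\sin(\theta)\bar{f}$, I would complete the square, $a m^2 + b m = a(m + \frac{b}{2a})^2 - \frac{b^2}{4a}$, and discard the constant $-\frac{b^2}{4a}$, again under the modulus.

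Next I would introduce the continuous variable $t = 2\sqrt{a}\,(m + \frac{b}{2a})$, chosen so that $2\pi a\,(m+\frac{b}{2a})^2 = \frac{\pi}{2}t^2$, matching the integrand of $\mathcal{C}$ and $\mathcal{S}$. A unit increment in $m$ corresponds to a step $\Delta t = 2\sqrt{a}$, so $\frac{1}{N}\sum_m e^{j\frac{\pi}{2}t^2}$ is the Riemann sum of $\frac{1}{2N\sqrt{a}}\int e^{j\frac{\pi}{2}t^2}\,\mathrm{d}t$. The essential bookkeeping is that $N\sqrt{a} = \gamma_2$ and $\frac{b}{\sqrt{a}} = \gamma_1$, which follow directly from \eqref{eqn: gamma1}–\eqref{eqn: gamma2} once $\bar{L}=N\bar{d}$ is substituted. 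Evaluating the endpoints $m = \pm\frac{N-1}{2}$ then gives, up to $\cO(1/N)$ corrections, integration limits $\gamma_1 - \gamma_2$ and $\gamma_1 + \gamma_2$, while the prefactor becomes $\frac{1}{2\gamma_2}$.

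Passing to the limit, the sum converges to $\frac{1}{2\gamma_2}\int_{\gamma_1-\gamma_2}^{\gamma_1+\gamma_2} e^{j\frac{\pi}{2}t^2}\,\mathrm{d}t$. Splitting the exponential via Euler's formula and invoking the definitions of the Fresnel functions gives $\int_{\gamma_1-\gamma_2}^{\gamma_1+\gamma_2}\cos(\frac{\pi}{2}t^2)\,\mathrm{d}t = \bar{\mathcal{C}}(\gamma_1,\gamma_2)$ and the analogous sine identity, so the modulus equals $\left|\frac{\bar{\mathcal{C}}(\gamma_1,\gamma_2)+j\bar{\mathcal{S}}(\gamma_1,\gamma_2)}{2\gamma_2}\right| = \cG(\gamma_1,\gamma_2)$, as claimed. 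I would note that $a>0$ since $\bar{f}>-1$, so the chirp orientation is correct and no conjugation of $t$ is needed.

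The hard part will be justifying the sum-to-integral passage rigorously, because the integrand is highly oscillatory: its instantaneous frequency grows linearly in $t$, so the Riemann-sum error is not automatically negligible. The clean way to control it is to take the limit with $\gamma_1,\gamma_2$ held fixed, which forces $\Delta t = 2\sqrt{a} = 2\gamma_2/N \to 0$; a first-order (trapezoidal / Euler–Maclaurin) bound then shows the discrepancy between sum and integral is $\cO(1/N)$ uniformly on the bounded interval $[\gamma_1-\gamma_2,\gamma_1+\gamma_2]$, and the endpoint approximations $\frac{N-1}{2}\approx\frac{N}{2}$ contribute only lower-order terms. I would also flag the consistency of this scaling with the stated large-$N$ regime, namely that keeping $\gamma_1,\gamma_2$ fixed corresponds to growing $\bar{r}$ so the array remains within the Fresnel region assumed in Lemma~\ref{lem: mu r theta f}.
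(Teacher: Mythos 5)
Your proposal is correct and follows essentially the same route as the paper: both complete the square in the antenna index to write the phase as a pure chirp, pass from the normalized sum to a Riemann integral with an $\cO(1/N)$ error, rescale the integration variable so the phase matches $\frac{\pi}{2}t^2$, and identify the limits $\gamma_1\pm\gamma_2$ and prefactor $\frac{1}{2\gamma_2}$ via $N\sqrt{a}=\gamma_2$ and the $(N\pm1)\bar{d}\approx\bar{L}$ approximation. Your centered re-indexing and the explicit Euler--Maclaurin remark on the oscillatory Riemann-sum error are cosmetic refinements of the paper's argument, not a different method.
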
}
\begin{proof}
{We	rewrite \eqref{eqn: mu r theta f summation form} by defining $\Delta_m=\frac{1}{N}$ for $m=0, \frac{1}{N}, \dots, \frac{N-1}{N}$,  $a=\cos(\theta)\sqrt{\frac{(1+\bar{f}) \bar{d}^2}{2 \bar{r}}}$ , and $b=\frac{1}{a}\left({\frac{(1+\bar{f})\bar{d}^2 \cos^2(\theta) (N-1)}{4 \bar{r}}+ \frac{\bar{d} \sin(\theta) \bar{f}}{2}}\right)$ to get 
	\begin{equation}\label{eqn: app, mu r theta f change of variables}
		\mu_{\mathsf{nf,wb}}^{\mathsf{approx}}(\bar{f}) =   \left|\Delta_m  \sum_{m=0}^{1-\frac{1}{N}} \exp\left(j{2\pi}( a m N -b )^2\right)\right|.
	\end{equation}
	As $N \rightarrow \infty$, we can express the summation in \eqref{eqn: app, mu r theta f change of variables} as an integral using the Riemann integral method\cite{cui_channel_2022-1}  as
	\begin{align}\label{eqn: mu nfwb approx}
		\mu_{\mathsf{nf,wb}}^{\mathsf{approx}}(\bar{f})   &=  \left| \int_0^{1}  \exp\left(j{2\pi}( a  N t -b )^2\right) dt     + \cO\left(\frac{1}{N}\right)\right|, \nonumber \\
		& \equalityA \left|\frac{ \int_{-2b}^{2 a N- 2b}  \exp\left(j\frac{\pi}{2}( t^{\prime} )^2\right) dt^{\prime}}{2 a N}  + \cO\left(\frac{1}{N}\right)\right|,
	\end{align}
	where $(a) $ follows by letting $a N t- b= \frac{t^{\prime}}{2}$. For large $N$, using  $(N-1)\bar{d}\approx (N+1)\bar{d}\approx N \bar{d}=\bar{L}$, \eqref{eqn: gamma1}, \eqref{eqn: gamma2},  we get  \eqref{eqn: G gamma1 gamma2}.}
\end{proof}

The expression derived in Lemma \ref{lem: mu r theta f fresnel} simplifies to the narrowband case, i.e., $\bar{f}=0$, by substituting $\gamma_1=0$ in \eqref{eqn: G gamma1 gamma2}. Hence, the near-field narrowband gain for large $N$ is defined as $\cG^{\mathsf{nb}}(\gamma_2) =\underset{N\rightarrow \infty}{\mathsf{lim}} \mu_{\mathsf{nf,nb}}= \left|\frac{ {\mathcal{C}}( \gamma_2)+ j{\mathcal{S}}(\gamma_2)}{\gamma_2}\right|$, which is the same near-field gain expression derived in \cite{cui_channel_2022-1}  and \cite{ cui_near-field_2021-1}.
We have generalized $\cG^{\mathsf{nb}}(\gamma_2)$ \cite{cui_channel_2022-1}, \cite{ cui_near-field_2021-1} by incorporating the bandwidth effect through the parameter $\gamma_1$. 
\begin{figure}[htbp]
	
	\centering
	\includegraphics[width=0.4\textwidth]{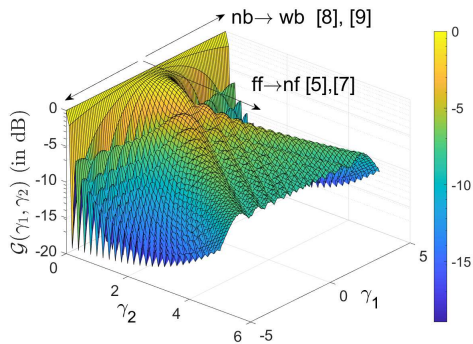}
	
	\caption{3D plot of $\cG(\gamma_1, \gamma_2) $ vs $\gamma_1$ and $\gamma_2$ in dB scale. Increase in $|\gamma_1|$ corresponds to transition from narrowband to wideband; increase in $\gamma_2$ corresponds to transition from far-field to near-field. }
	\label{fig:3d plot gamma1 gamma2}
\end{figure}

\begin{figure}[htbp]
	\centering
	\includegraphics[width=0.45\textwidth]{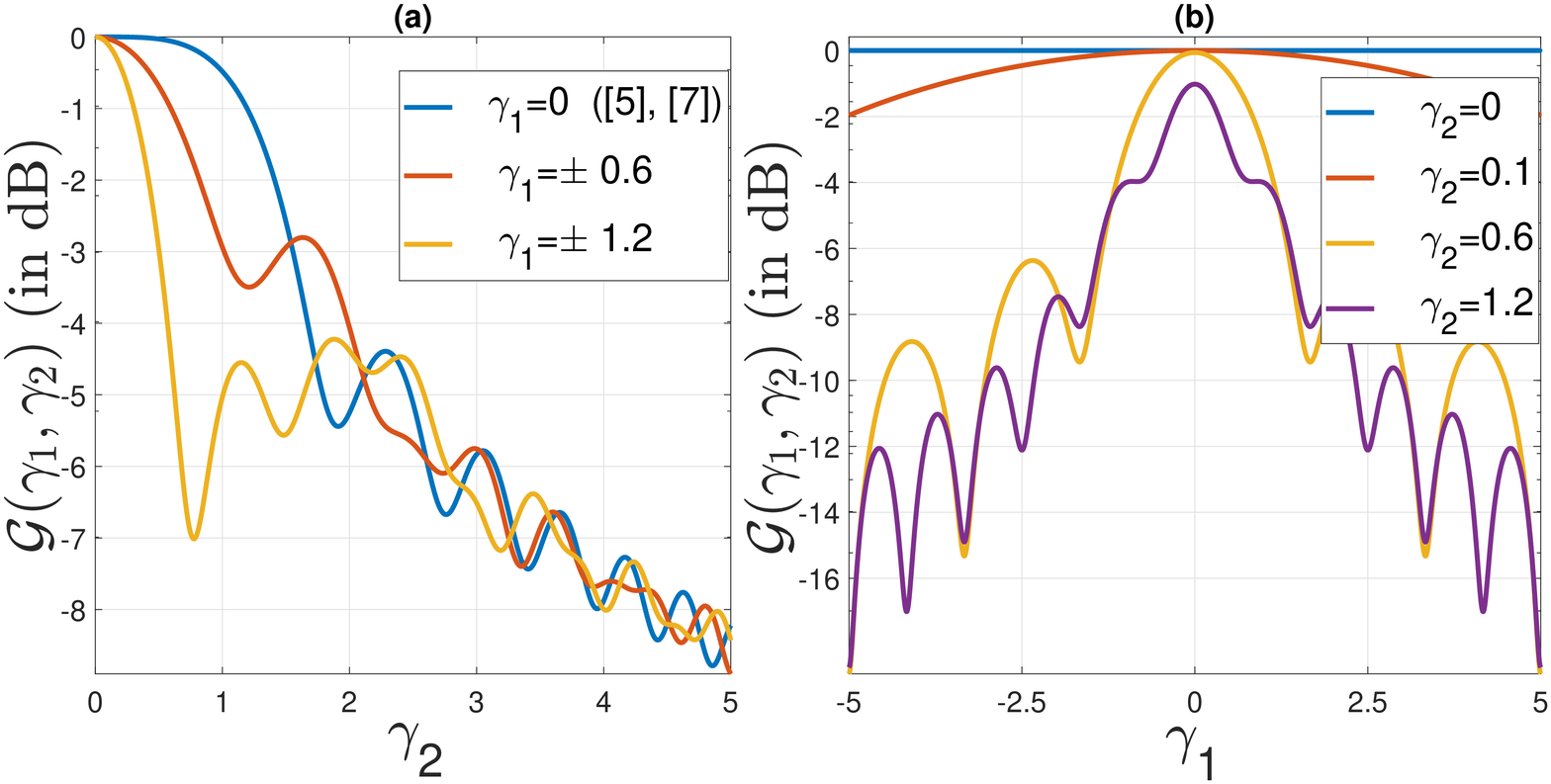}
	
	\caption{2D cross-sections of $\cG(\gamma_1, \gamma_2) $ in dB scale. (a) $\cG(\gamma_1, \gamma_2) $ vs $\gamma_2$ for a fixed  $|\gamma_1|$; (b) $\cG(\gamma_1, \gamma_2) $ vs $\gamma_1$ for a fixed $\gamma_2$.  }
	\label{fig:2d plot crosssections}
\end{figure}

In Fig.~\ref{fig:3d plot gamma1 gamma2}, we study the variation of $\cG(\gamma_1, \gamma_2) $ with $\gamma_1$ and $\gamma_2$ to understand the beamforming gain dependence on the system parameters.
There is a drastic reduction in $\cG(\gamma_1, \gamma_2) $ with increase in $| \gamma_1|$ and $\gamma_2$. Note that the function $\cG(\gamma_1, \gamma_2) $ is an even function with respect to $\gamma_1$. This reflects the fact that the beamforming gain  is symmetric with respect to $\theta$.

The 2D cross-sections of the beamforming gain are shown in Fig.~\ref{fig:2d plot crosssections}. 
In Fig.~\ref{fig:2d plot crosssections} (a), we  keep $\gamma_1$ fixed and plot $\cG(\gamma_1, \gamma_2)$ as a function of $\gamma_2$. Assuming a non-zero angle $\theta$, the 2D plot corresponding to $\gamma_1=0$, shown in blue, represents the narrowband case which is same as the plot shown in prior work on near-field propagation \cite{cui_channel_2022-1}, \cite{ cui_near-field_2021-1}.
 The parameter $\gamma_2$    captures the transition from far-field to near-field in \cite{ cui_near-field_2021-1}. The effective Rayleigh distance, $d_\mathsf{ERD}(\theta)\!=\! 0.367 \cos^2(\theta)({2 \bar{L}^2 \lambda_\mathsf{c} })$, is defined  using \eqref{eqn: gamma2} with $\bar{f}=0$, as the distance below which the value of the beamforming gain $\cG^{\mathsf{nb}}( \gamma_2)$ falls under the threshold 0.95 in linear scale \cite{ cui_near-field_2021-1}. 
   We observe that for the  wideband case, i.e., $\bar{f}\neq 0$, the value of $\cG(\gamma_1, \gamma_2) $ drops sharply with $\gamma_2$ as $| \gamma_1|$ increases.
In Fig.~\ref{fig:2d plot crosssections} (b),  $\gamma_2$ is fixed and we plot $\cG(\gamma_1, \gamma_2)$ as a function of $\gamma_1$.   For small values of $\gamma_2$, the peak value is close to  0 dB.
However, for larger values of $\gamma_2$,  the peak value drops and the main lobe shrinks. These results suggest that the joint  effect of  $\gamma_1$ and $\gamma_2$ is more severe than their individual effect.

The 2D cross-sections of the 3D plot in Fig.~\ref{fig:3d plot gamma1 gamma2}  resemble the plots from existing works \cite{cui_near-field_2021-1, 7841766,gao_wideband_2021}  which study the wideband and near-field phenomena separately.  The 3D plot jointly models the wideband and near-field effects.
 We analyze the connections of $\{ \gamma_1, \gamma_2\}$ with $\{ \bar{f}, \bar{r}, \bar{L}, \theta \}$ in Section \ref{sec: inverse mapping}.

\section{Inverse mapping of beamforming gain to system parameters space}\label{sec: inverse mapping}
Most of the existing  studies \cite{bjornson_power_2020,selvan_fraunhofer_2017,cui_near-field_2021-1, 7841766,gao_wideband_2021} analyze the beamforming gain as a function of the different system parameters. From a system design perspective, however, it is essential to understand the inverse relationship for each system parameter as a function of the beamforming gain and other system parameters. We identify a fundamental tradeoff between the  aperture and bandwidth in Section \ref{subsec: ABP}. We also establish a frequency-selective near-field boundary distance in Section~\ref{subsec: BAND}. 

\subsection{Aperture-bandwidth product }\label{subsec: ABP}

In Section \ref{sec: array gain analysis}, we introduced the beamforming gain dependence on $\bar{f}$ through $\gamma_1$ and  $\gamma_2$. 
From \eqref{eqn: gamma1} and \eqref{eqn: gamma2},  the normalized frequency $\bar{f}$ can be expressed as 
\begin{equation}\label{eqn: bar f}
	\bar{f}= - \frac{\gamma_1 \gamma_2}{\bar{L} \sin(\theta)}.
\end{equation}
We also define the fractional bandwidth 
as $f_{\mathsf{B}}=|2\bar{f}|$. Hence, from \eqref{eqn: bar f}, we have the relation
\begin{equation}\label{eqn: fbw l sin theta gamma 1 gamma 2}
|	f_{\mathsf{B}} \bar{L} \sin(\theta)|=|2 \gamma_1 \gamma_2|.
\end{equation}
To understand the maximum limit up to which the aperture and/or bandwidth can be scaled up while maintaining the narrowband and far-field assumption, we are interested in the maximum limit of the right hand side of  \eqref{eqn: fbw l sin theta gamma 1 gamma 2} which can be found numerically for a given value of $\cG(\gamma_1, \gamma_2)$.

We show the 2D contour plot version of Fig.~\ref{fig:3d plot gamma1 gamma2} in Fig.~\ref{fig:2d plot gamma1 gamma2 product}.
The contour plot is defined as the locus of the points in the $(\gamma_1, \gamma_2)$ space which achieve a fixed value of the beamforming gain $\cG(\gamma_1, \gamma_2)$. The pair of hyperbola marked in red, $ |\gamma_1 \gamma_2|= 0.5044$, corresponds to $\cG(\gamma_1, \gamma_2)= -2 $ dB. 
The pair of hyperbola marked in blue, $ |\gamma_1 \gamma_2|= 0.3654$, corresponds to $\cG(\gamma_1, \gamma_2)= -1 $ dB. 
Hence, from \eqref{eqn: fbw l sin theta gamma 1 gamma 2} and Fig.~\ref{fig:2d plot gamma1 gamma2 product}, we conclude that to maintain a beamforming gain, $\cG(\gamma_1, \gamma_2)\in [-2, -1]$ dB, $|f_{\mathsf{B}} \bar{L} \sin(\theta)| $ must approximately lie in the range $[0.73, 1]$.
The  upper limit on  $	|f_{\mathsf{B}}  \bar{L} \sin(\theta)|$  decreases as the threshold increases. From a system design perspective, the worst case angle of incidence $\theta_{\mathsf{worst}}$ can be chosen based on the sector division.
 Hence,  we get the worst case upper bound on $	f_{\mathsf{B}}  \bar{L}$ as $f_{\mathsf{B}}  \bar{L} \leq \left|\frac{ 2 \gamma_1 \gamma_2}{\sin(\theta_{\mathsf{worst}} )}\right|.$
In terms of un-normalized parameters, this simplifies to an important fundamental constraint on the product of the bandwidth, $B $, and the array aperture, $L$, as
\begin{equation}\label{eqn: upper bound abp}
	B   L \leq  \left|\frac{ 2  c \gamma_1 \gamma_2}{\sin(\theta_{\mathsf{worst}}  )}\right|.
\end{equation}
  The relationship in \eqref{eqn: upper bound abp} plays an important role in determining the limits on the system design parameters for a particular beamforming gain. We define the maximum usable bandwidth, for a fixed aperture $L$, that achieves beamforming gain $\tau$,  for the worst case incidence angle, as $B_{\mathsf{max}}= \left|\frac{ 2  c [\gamma_1 \gamma_2]_{\mathsf{max}}}{L \sin(\theta_{\mathsf{worst}} )} \right|$, where $[\gamma_1 \gamma_2]_{\mathsf{max}}$ is computed numerically for a given $\tau$. If the system bandwidth exceeds $B_{\mathsf{max}}$, the beamforming gain drops below the required threshold.
  \begin{figure}[htbp]
  	
  	\centering
  	\includegraphics[width=0.5\textwidth]{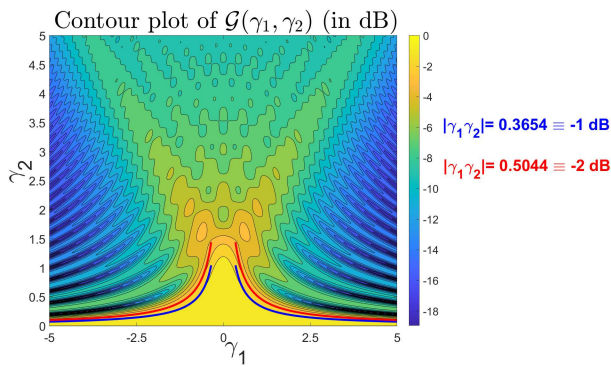}
  	\caption{Contour plot of $\cG(\gamma_1, \gamma_2) $ in dB scale. The hyperbolas determine  upper bound on the aperture-bandwidth product for a fixed beamforming gain.}
  	
  	\label{fig:2d plot gamma1 gamma2 product}
  \end{figure}
\begin{remark}
	 We observe that $f_{\mathsf{B}}  \bar{L} $ can also be written as the ratio of the maximum propagation delay difference across the array to the symbol duration $T_{\mathsf{s}}$, i.e.,  $f_{\mathsf{B}}  \bar{L} =\frac{N d/c}{T_{\mathsf{s}}} $. In \cite{jang_1-ghz_2019}, the upper bound on $f_{\mathsf{B}}  \bar{L} $  was loosely specified using $N d/c \ll T_{\mathsf{s}} $.
	   The bound proposed in \eqref{eqn: upper bound abp} is more precise.
\end{remark}

\subsection{Bandwidth-aware-near-field distance (BAND)}\label{subsec: BAND}
We use the relationship in \eqref{eqn: gamma2}  to derive a frequency-dependent near-field distance.
The normalized distance, $\bar{r}$,   can be written as  a function of $\bar{f}$, $\gamma_2$, $\bar{L}$, and $\theta$ as 
\begin{equation}\label{eqn: bar r}
\bar{r}(\bar{f}, \gamma_2, \bar{L}, \theta)= \frac{\bar{L}^2 \cos^2(\theta)(1+\bar{f})}{2 \gamma_2^2}.
\end{equation}
The $\mathsf{BAND}$ is the smallest distance  beyond which the beamforming gain  is always above a certain threshold $\tau$. 
 By expressing $\gamma_1$ using $\{\bar{f}, \gamma_2, \bar{L}, \theta\}$,
  the $\mathsf{BAND}$  is defined as
\begin{align}
\mathsf{BAND}(f,f_{\mathsf{c}}, \tau, {L}, \theta)\!	= & \text{inf}\bigg\{\! \lambda_\mathsf{c}	\bar{r}^{\prime}:\! \cG\left(\!\frac{-\bar{f}\bar{L}\sin(\theta)}{\gamma_2}, \gamma_2\! \right)\geq \tau,\nonumber\\ 
	&  \forall  \bar{r}(\bar{f}, \gamma_2, \bar{L}, \theta) \geq  	\bar{r}^{\prime}\bigg\}. 
\end{align}
The $\mathsf{BAND}$ is computed using  \eqref{eqn: bar r}  from  $\left(\frac{-\bar{f}\bar{L}\sin(\theta)}{\gamma_2}, \gamma_2\right)$,  which is obtained numerically for a  $\tau$  from Fig.~\ref{fig:2d plot gamma1 gamma2 product}.

 The $\mathsf{BAND}$ is related to the distances defined only for the narrowband case, i.e., $\bar{f}=0$, as follows:
\begin{itemize}
	\item Effective Rayleigh distance \cite{ cui_near-field_2021-1}: $d_{\mathsf{ERD}}(\theta)=\mathsf{BAND}(0, f_{\mathsf{c}},  -0.2 \mathsf{ dB},  {L}, \theta )$.
	\item Fraunhofer array distance \cite{selvan_fraunhofer_2017}: $d_{\mathsf{FA}}= 2  \bar{L}^2\lambda_\mathsf{c}= \mathsf{BAND}(0, f_{\mathsf{c}}, -0.04 \mathsf{ dB},  {L}, 0^{\circ} )  $.
\end{itemize}
The $\mathsf{BAND}$ is a wideband generalization of the near-field distances proposed in \cite{selvan_fraunhofer_2017} and \cite{ cui_near-field_2021-1}. The $\mathsf{BAND}$ also determines the favorable regime for a transceiver hardware operating at any general frequency  offset from $f_{\mathsf{c}}$.

\section{Numerical results}\label{sec: results}

The analysis presented in Section \ref{sec: array gain analysis} and Section \ref{sec: inverse mapping} holds for any carrier frequency. In this section, we provide illustrations for some specific carrier frequencies  currently used in the 5G standards~\cite{9566671}.
\begin{figure}[htbp]
	\centering
	\includegraphics[width=0.47\textwidth]{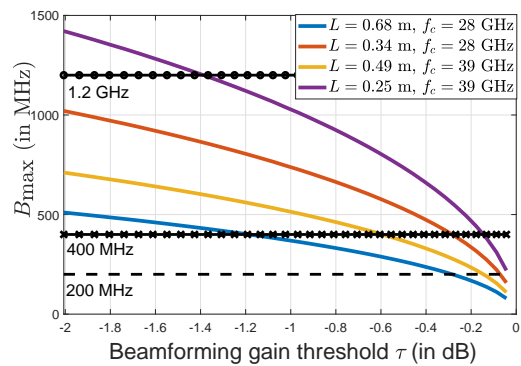}
		
	\caption{$B_{\mathsf{max}} $ as a function of the beamforming gain threshold for $\{{L}, f_\mathsf{c}\} \equiv $  (0.68 m, 28 GHz), (0.34 m, 28 GHz), (0.49 m, 39 GHz), (0.25 m, 39 GHz). Tradeoff between gain threshold and maximum usable bandwidth is illustrated.}
	\label{fig:bw vs array gain threshold}
\end{figure}

 \begin{figure}[htbp]
 	
 	\centering
 	\includegraphics[width=0.47\textwidth]{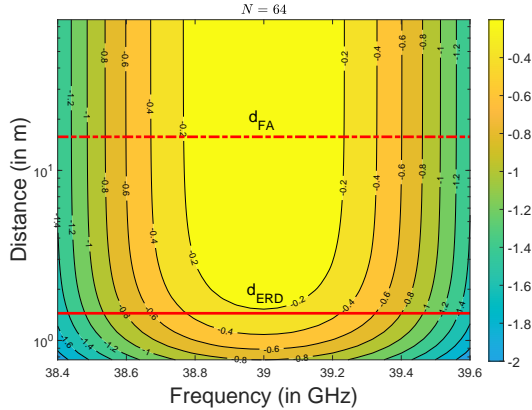}
 	\vspace{-11pt}
 	\caption{Contour plot of beamforming gain (in dB) as a function of distance (in m) and frequency (in GHz) for $f_\mathsf{c}=39$ GHz and $N=64$.  The $\mathsf{BAND}$ attains a global minima at $f_\mathsf{c}$.  }
 	\label{fig:2d plot contour r vs f}
 	
 \end{figure}
In Fig.~\ref{fig:bw vs array gain threshold},  
$B_{\mathsf{max}}$ is plotted as a function of the beamforming gain threshold for $\theta_{\mathsf{worst}} =60^{\circ} $.
For n261 band in 5G NR\cite{9566671},  $f_\mathsf{c}=28$ GHz.
For n260 band in 5G NR\cite{9566671}, $f_\mathsf{c}=39$ GHz.
For each band, we plot $B_{\mathsf{max}}$ for $N=\{64, 128\}$.
 We also mark the  bandwidths of 200 MHz, 400 MHz, and 1.2 GHz.

Fig.~\ref{fig:bw vs array gain threshold} offers two important insights. For a given $\{{L}, f_\mathsf{c}\}$ pair, we can determine the maximum possible beamforming gain for the 5G NR bandwidths.  Conversely, we can also determine $B_{\mathsf{max}} $ for a fixed value of beamforming gain. As expected from \eqref{eqn: upper bound abp},  for a fixed $f_\mathsf{c}$, $B_{\mathsf{max}} $  doubles as ${L}$ gets halved to maintain the same beamforming gain.

In Fig.~\ref{fig:2d plot contour r vs f}, we show the  contour plot in Fig.~\ref{fig:2d plot gamma1 gamma2 product} with a change of variables from $(\gamma_1, \gamma_2)$  to the $({r}, {f})$ space using \eqref{eqn: bar f} and \eqref{eqn: bar r} for  $\bar{d}=0.5$, $\theta=60^{\circ}$, $f_\mathsf{c}=39$ GHz and for $N=64$. 
Each contour denotes the $\mathsf{BAND}$ for a given gain threshold.
For operating distances greater than the $\mathsf{BAND}$, the beamforming gain will always remain above the threshold. The plot shows that the distance increases for frequencies away from the center frequency. This distance diverges beyond a certain value of frequency $|f|$, which illustrates the concept of the maximum usable bandwidth
 derived in \eqref{eqn: upper bound abp}. 
In Fig.~\ref{fig:2d plot contour r vs f}, we also plot $d_{\mathsf{ERD}}$ and $d_{\mathsf{FA}}$ which are same for all frequencies in the band.  
The distance $d_{\mathsf{FA}}$ is inaccurate because it does not incorporate the angle dependence and frequency-selectivity.
The distance $d_{\mathsf{ERD}}$ is consistent with the $\mathsf{BAND}$ value for $\tau=-0.2$ dB only at the center frequency 39 GHz because $d_{\mathsf{ERD}}$ is derived using a narrowband model.
The distance $d_{\mathsf{ERD}}$ underestimates the near-field distance 
 for frequencies away from $f_\mathsf{c}$.

\section{Conclusion}
In this letter, we proposed a new definition of the beamforming gain metric which incorporates both wideband and near-field propagation effects. For a MISO system with a ULA at the transmitter, we provided  a simple closed-form expression for the beamforming gain in terms of standard Fresnel functions with  two parameters that model the near-field and wideband effects. A key observation is that the beamforming gain depends only on the normalized frequency and normalized distances, which enables the validity of the insights for any carrier frequency. The proposed upper bound on the aperture-bandwidth product is beneficial for characterizing the performance of the existing frequency-flat beamforming when scaling up in carrier frequency, bandwidth, and array aperture. 
We showed that the $\mathsf{BAND}$ corresponding to a particular threshold attains minima at  $f_\mathsf{c}$ and increases for frequencies away from $f_\mathsf{c}$. The model and analysis presented in this work is especially relevant for short distance transmission where the near-field effect is more relevant, with potentially small impact from angular spread. We encourage future studies to address the impact of angular spread on beamforming gain.
 Other future directions of work include extending the $\mathsf{BAND}$ definition to planar arrays, the MIMO LOS channel, and incorporating the mutual coupling effect for dense arrays.

\bibliographystyle{IEEEtran}
\bibliography{references_Nitish.bib}

\begin{thebibliography}{10}
\providecommand{\url}[1]{#1}
\csname url@samestyle\endcsname
\providecommand{\newblock}{\relax}
\providecommand{\bibinfo}[2]{#2}
\providecommand{\BIBentrySTDinterwordspacing}{\spaceskip=0pt\relax}
\providecommand{\BIBentryALTinterwordstretchfactor}{4}
\providecommand{\BIBentryALTinterwordspacing}{\spaceskip=\fontdimen2\font plus
\BIBentryALTinterwordstretchfactor\fontdimen3\font minus
  \fontdimen4\font\relax}
\providecommand{\BIBforeignlanguage}[2]{{%
\expandafter\ifx\csname l@#1\endcsname\relax
\typeout{** WARNING: IEEEtran.bst: No hyphenation pattern has been}%
\typeout{** loaded for the language `#1'. Using the pattern for}%
\typeout{** the default language instead.}%
\else
\language=\csname l@#1\endcsname
\fi
#2}}
\providecommand{\BIBdecl}{\relax}
\BIBdecl

\bibitem{selvan_fraunhofer_2017}
K.~T. Selvan and R.~Janaswamy, ``\BIBforeignlanguage{en}{Fraunhofer and
  {Fresnel} distances: {Unified} derivation for aperture antennas.}''
  \emph{\BIBforeignlanguage{en}{IEEE Antennas Wireless Propag. Mag.}}, vol.~59,
  no.~4, pp. 12--15, Aug. 2017.

\bibitem{tripathi_millimeter-wave_2021}
S.~Tripathi, N.~V. Sabu, A.~K. Gupta, and H.~S. Dhillon, ``Millimeter-wave and
  terahertz spectrum for {6G} wireless,'' in \emph{{6G} {Mobile} {Wireless}
  {Networks}}.\hskip 1em plus 0.5em minus 0.4em\relax Cham: Springer
  International Publishing, 2021, pp. 83--121.

\bibitem{heath_overview_2016}
R.~W. Heath~Jr., N.~Gonzalez-Prelcic, S.~Rangan, W.~Roh, and A.~M. Sayeed,
  ``\BIBforeignlanguage{en}{An overview of signal processing techniques for
  millimeter wave {MIMO} systems},'' \emph{\BIBforeignlanguage{en}{IEEE J. Sel.
  Topics Signal Process.}}, vol.~10, no.~3, pp. 436--453, Apr. 2016.

\bibitem{6732923}
S.~Rangan, T.~S. Rappaport, and E.~Erkip, ``Millimeter-wave cellular wireless
  networks: Potentials and challenges,'' \emph{Proc. IEEE}, vol. 102, no.~3,
  pp. 366--385, 2014.

\bibitem{cui_channel_2022-1}
M.~Cui and L.~Dai, ``\BIBforeignlanguage{en}{Channel estimation for extremely
  large-scale {MIMO}: far-field or near-field?}''
  \emph{\BIBforeignlanguage{en}{IEEE Trans. Commun}}, pp. 1--1, 2022.

\bibitem{bjornson_power_2020}
E.~Björnson and L.~Sanguinetti, ``\BIBforeignlanguage{en}{Power scaling laws
  and near-field behaviors of massive {MIMO} and intelligent reflecting
  surfaces},'' \emph{\BIBforeignlanguage{en}{IEEE Open J. Commun. Soc.}},
  vol.~1, pp. 1306--1324, 2020.

\bibitem{cui_near-field_2021-1}
M.~Cui, L.~Dai, R.~Schober, and L.~Hanzo, ``\BIBforeignlanguage{en}{Near-field
  wideband beamforming for extremely large antenna arrays},''
  \emph{\BIBforeignlanguage{en}{arXiv:2109.10054}}.

\bibitem{7841766}
M.~Cai, K.~Gao, D.~Nie, B.~Hochwald, J.~N. Laneman, H.~Huang, and K.~Liu,
  ``Effect of wideband beam squint on codebook design in phased-array wireless
  systems,'' in \emph{Proc. IEEE Globecom}, 2016, pp. 1--6.

\bibitem{gao_wideband_2021}
F.~Gao, B.~Wang, C.~Xing, J.~An, and G.~Y. Li,
  ``\BIBforeignlanguage{en}{Wideband beamforming for hybrid massive {MIMO}
  terahertz communications},'' \emph{\BIBforeignlanguage{en}{IEEE J. Sel. Areas
  Commun.}}, vol.~39, no.~6, pp. 1725--1740, Jun. 2021.

\bibitem{lu_communicating_2021}
H.~Lu and Y.~Zeng, ``\BIBforeignlanguage{en}{Communicating with extremely
  large-scale array/surface: {Unified} modelling and performance analysis},''
  \emph{\BIBforeignlanguage{en}{IEEE Trans. Wireless Commun.}}, vol.~21, June
  2022.

\bibitem{daniels2005miso}
R.~C. Daniels and R.~W. Heath, ``{MISO} precoding for temporal and spatial
  focusing,'' in \emph{Proc. Int. Symp. Wirel. Commun. Syst.}, vol.~6, Aalborg,
  Denmark, 2005, pp. 18--22.

\bibitem{rotman_true_2016}
R.~Rotman, M.~Tur, and L.~Yaron, ``\BIBforeignlanguage{en}{True time delay in
  phased arrays},'' \emph{\BIBforeignlanguage{en}{Proc. IEEE}}, vol. 104,
  no.~3, pp. 504--518, Mar. 2016.

\bibitem{jang_1-ghz_2019}
S.~Jang, R.~Lu, J.~Jeong, and M.~P. Flynn, ``\BIBforeignlanguage{en}{A 1-{GHz}
  16-element four-beam true-time-delay digital beamformer},''
  \emph{\BIBforeignlanguage{en}{IEEE J. Solid-State Circuits}}, vol.~54, no.~5,
  pp. 1304--1314, May 2019.

\bibitem{9566671}
F.~Launay, \emph{5G‐NR Radio Interface – Operations on the Frequency
  Bands}, 2021, pp. 141--160.

\end{thebibliography}

\end{document}